\documentclass[12pt]{article}

\usepackage{amsmath,amssymb,amsfonts}
\usepackage{algorithmic}
\usepackage{graphicx}
\usepackage{textcomp}
\usepackage{algorithm}
\usepackage{subcaption}
\usepackage{bm}
\usepackage{amsthm}
\usepackage{float}
\usepackage{placeins}
\usepackage{geometry}
\usepackage{hyperref}
\usepackage{adjustbox}

\newtheorem{theorem}{Theorem}
\newtheorem{lemma}{Lemma}
\newtheorem{corollary}{Corollary}

\title{Closed-Form and Constant-Time New-Source Selection for Fault-Tolerant Broadcasting in Dense Gaussian Networks}

\author{Bader Albader\\
\small Department of Computer Science, Kuwait University, Kuwait\\
\small \texttt{albader@cs.ku.edu.kw}}

\date{}

\begin{document}

\maketitle

\begin{abstract}
Fault-tolerant broadcasting in dense Gaussian networks can be recovered by re-rooting the broadcast at a new source that is at maximum graph distance from the faulty nodes. This paper extends the published re-rooting framework by replacing its boundary-search source-selection step with a quotient-lattice-aware algebraic construction. The first contribution is a constant-time counting method for valid new sources. The counting problem is formulated as an intersection of two diameter-$k$ boundary sets in the Gaussian quotient. A compact piecewise expression is retained for the local, unshifted boundary intersection, and the exact quotient-network count is obtained by a fixed union of side-pair intervals over the nine relevant quotient-lattice copies. This gives a closed-form constant-size counting procedure without scanning either the network or the boundary. The second contribution is a shifted direct selector for two arbitrary faulty nodes. Given faulty nodes $A$ and $B$, the problem is translated to $C=\operatorname{mod}_{G_k}(B-A)$, and the selector finds a point $P$ satisfying $d(P,0)=d(P,C)=k$. For each of nine quotient-lattice shifts, the selector checks sixteen signed linear systems. Nonparallel systems are solved using Cramer's rule, while parallel systems are handled by interval-endpoint selection. Thus, at most $9\times16=144$ shifted sign cases are evaluated, giving $O(1)$ source selection under the standard word-RAM model. Computational validation reports zero count mismatches over 26,623 tested nodes, 500,000 valid outputs over 500,000 sampled fault pairs, and 40,000 successful re-rooted broadcast trials. Runtime results show that the shifted selector has a small fixed overhead for small networks but remains nearly stable as $k$ increases, achieving a $5.92\times$ speedup over boundary search at $k=200$. These results strengthen the re-rooting approach by making new-source selection algebraic, bounded, and independent of the network size.
\end{abstract}

\noindent\textbf{Keywords:} Dense Gaussian networks, fault-tolerant broadcasting, re-rooting, new-source selection, interconnection networks, constant-time algorithms, graph distance.

\section{Introduction}

Interconnection networks play an important role in the design of parallel and distributed computing systems. Their topology directly affects communication latency, fault tolerance, routing complexity, and broadcasting efficiency~\cite{grama2003parallel,duato2002interconnection,dally2004principles}. Among many well-known network structures, circulant-based networks and Gaussian networks have received attention because they combine regularity, symmetry, and relatively small diameter~\cite{flahive2010topology,martinez2006dense,beivide2005gaussian}. These properties make them useful models for studying communication algorithms, routing strategies, and fault-tolerant information dissemination.

Dense Gaussian networks form a special class of degree-four interconnection networks constructed over Gaussian integers~\cite{martinez2006dense,beivide2005gaussian}. A dense Gaussian network with parameter $k$, denoted by $G_k$, contains
\[
N = k^2 + (k+1)^2
\]
nodes and has diameter $k$. Each node can be represented either by a Gaussian integer coordinate or by an equivalent integer label modulo $N$. This algebraic structure provides a compact way to describe routing paths, coordinate differences, and wrap-around connections. Related algebraic approaches using Gaussian integers and circulant structures have also been studied in connection with domination, routing, and scalable network construction~\cite{martinez2005perfect,vallejo2008hierarchical}. As a result, dense Gaussian networks offer a useful setting for the study of broadcasting and fault-tolerant communication.

Broadcasting is a fundamental communication operation in which one source node sends information to all other nodes in the network. In a fault-free dense Gaussian network, the regular structure of $G_k$ allows a source node to reach all other nodes within $k$ steps. However, when one or more nodes fail, some broadcast paths may be interrupted. This can prevent the original broadcast tree from covering all non-faulty nodes. Fault-tolerant communication is therefore a major concern in interconnection networks and network-on-chip architectures~\cite{kliazovich2010survey,pasricha2008onchip,flich2010designing}. A fault-tolerant broadcasting method must not only detect the effect of faulty nodes, but also provide an efficient mechanism for restoring coverage.

The published re-rooting-based broadcasting method addresses this problem by selecting a new source node when faults occur~\cite{albader2026rerooting}. The main idea is to restart or continue the broadcast from a carefully selected node that is farthest from the faulty nodes in terms of graph distance. In particular, for two faulty nodes $A$ and $B$, a valid new source $NS$ should satisfy
\[
d(NS,A)=k
\]
and
\[
d(NS,B)=k.
\]
Such a node allows the recovered broadcast to begin from a position that avoids the faulty region and preserves the maximum-distance structure of the network.

Although the re-rooting approach provides an effective recovery mechanism, the selection of a valid new source raises an important mathematical and algorithmic question. Existing treatment mainly establishes the existence of a suitable new source or obtains one through search-based procedures. A direct closed-form method for counting and selecting valid new sources is still needed. Without such a method, the new-source selection step may depend on scanning network nodes or testing boundary candidates, which obscures the algebraic structure of the problem and may increase the cost of recovery.

The present work is a direct extension of the re-rooting-based method introduced in~\cite{albader2026rerooting}. In that work, the re-rooting framework was proposed, the existence of a common distance-$k$ node was proved for any two faulty nodes, and an $O(k)$ boundary-search procedure was used to find a valid new source. In contrast, the present paper does not propose a new broadcasting model. Instead, it strengthens the mathematical and algorithmic foundation of the source-selection step by deriving the exact number of valid new sources and by replacing the boundary-search procedure with a constant-time algebraic construction.

This paper extends the re-rooting framework by developing a quotient-lattice-aware counting and selection theory for valid new sources in dense Gaussian networks. The first part of the paper studies the number of valid new sources. For a given node
\[
A=(a_x,a_y),
\]
we count the number of nodes $p$ satisfying
\[
d(0,p)=k
\]
and
\[
d(A,p)=k.
\]
These nodes are exactly the candidates that are simultaneously at maximum graph distance from the origin and from $A$. By expressing the position of $A$ using
\[
a=\max\{|a_x|,|a_y|\}
\]
and
\[
b=\min\{|a_x|,|a_y|\},
\]
we derive a compact piecewise formula for the local boundary-intersection count and then extend it to an exact quotient-lattice-aware count using a fixed set of side-pair interval computations. This shows that every nonzero node has at least four valid new sources and also accounts for wrap-around boundary intersections.

The second part of the paper uses this structure to develop a direct method for two arbitrary faulty nodes. Given faulty nodes $A$ and $B$, the problem is translated to an equivalent problem involving the origin and the difference node
\[
C=\operatorname{mod}_{G_k}(B-A).
\]
A valid point $P$ is then found such that
\[
d(P,0)=k
\]
and
\[
d(P,C)=k.
\]
The desired new source is obtained by translating back:
\[
NS=\operatorname{mod}_{G_k}(A+P).
\]
The proposed method solves this problem by considering a fixed set of quotient-lattice copies and a fixed set of sign configurations. Specifically, it checks nine lattice shifts and sixteen sign tuples per shift, for at most $9\times16=144$ algebraic cases. Since this bound is independent of $k$ and $N$, the selection procedure runs in $O(1)$ time under the standard word-RAM model.

The main contributions of this paper are summarized as follows:
\begin{itemize}
    \item We formulate the valid new-source problem as an intersection problem between two $k$-step boundary sets in $G_k$.
    \item We derive a compact piecewise formula for the local boundary-intersection count and an exact quotient-lattice-aware constant-time counting formula for the full Gaussian quotient network.
    \item We prove that every nonzero node has at least four valid new sources, strengthening the existence argument used in re-rooting-based fault-tolerant broadcasting.
    \item We develop a quotient-lattice-aware direct algebraic construction for selecting a valid new source for two arbitrary faulty nodes. The construction handles wrap-around boundary copies, nonparallel signed systems, and parallel signed systems without scanning the network or the boundary.
    \item We show that the proposed shifted direct-selection method runs in $O(1)$ time because it checks at most $9\times16=144$ algebraic cases and performs only constant-time arithmetic in each case, independent of the network size.
\end{itemize}

The rest of this paper is organized as follows. Section~\ref{sec:related-work} reviews related work on Gaussian networks, circulant topologies, fault-tolerant communication, and algebraic routing methods. Section~\ref{sec:preliminaries} introduces the required notation and background on dense Gaussian networks. Section~\ref{sec:counting} derives the local piecewise count and the exact quotient-lattice-aware counting formula for valid new sources. Section~\ref{sec:direct-selection} presents the shifted constant-time construction for two arbitrary faulty nodes. Section~\ref{sec:complexity} compares the proposed method with search-based alternatives. Section~\ref{sec:validation} presents computational validation of the counting formula and the direct-selection algorithm. Section~\ref{sec:conclusion} concludes the paper and discusses future work. Finally, Appendix~\ref{app:side-pair-counting} gives the side-pair interval formulas used by the quotient-lattice-aware counter.

\section{Related Work}
\label{sec:related-work}

Gaussian and circulant-based interconnection networks have been widely studied because they provide regular structures, small diameter, and algebraic representations that are useful for routing and broadcasting. Flahive and Bose studied the topology of Gaussian and Eisenstein--Jacobi interconnection networks and showed how quotient rings of Gaussian and Eisenstein--Jacobi integers can be used to construct interconnection networks with desirable topological properties~\cite{flahive2010topology}. Dense Gaussian networks were later studied as degree-four circulant-based topologies suitable for on-chip multiprocessors, where their smaller diameter and two-dimensional labeling provide advantages over traditional torus-based structures~\cite{martinez2006dense}. Related work on Gaussian interconnection networks also introduced coordinate-based representations that simplify analysis, routing, and broadcasting in these networks~\cite{beivide2005gaussian}. In addition, the connection between circulant graphs, Gaussian integers, and domination problems further demonstrates the usefulness of algebraic methods in this family of networks~\cite{martinez2005perfect}.

Hierarchical and scalable versions of these networks have also been considered. Vallejo, Mart\'inez, and Beivide studied hierarchical topologies for large-scale two-level networks, extending the use of Gaussian and circulant-based structures to larger systems~\cite{vallejo2008hierarchical}. More generally, interconnection networks are central to parallel and distributed computing because network topology directly affects routing cost, communication latency, bisection bandwidth, and fault tolerance~\cite{grama2003parallel}. Standard treatments of interconnection networks emphasize that regularity, symmetry, degree, and diameter are important design factors for scalable communication systems~\cite{duato2002interconnection,dally2004principles}.

Fault tolerance and reliable communication are major concerns in interconnection networks and network-on-chip architectures. Kliazovich, Granelli, and Miorandi surveyed fault-tolerant communication in network-on-chip architectures and emphasized the importance of maintaining communication under link or node failures~\cite{kliazovich2010survey}. Pasricha and Dutt discussed on-chip communication architectures and showed how topology and routing decisions affect system-level performance and reliability~\cite{pasricha2008onchip}. Flich and Bertozzi also studied network-on-chip design in the nanoscale era, where reliability, scalability, and routing efficiency are critical concerns~\cite{flich2010designing}.

Broadcasting and spanning-tree construction are also closely related to the present work. Wu and Huang developed a distributed algorithm for constructing independent spanning trees in parallel systems, which is relevant because independent or alternative spanning structures can improve communication reliability~\cite{wu1999distributed}. Saad and Schultz studied topological properties of hypercubes, including structural properties important for communication algorithms in parallel systems~\cite{saad1988topological}.

Recent work has continued to study circulant and Gaussian-related topologies for networks-on-chip. Romanov et al. developed routing algorithms for networks-on-chip based on two-dimensional optimal circulant topologies, showing the continued relevance of circulant structures for efficient routing in NoC environments~\cite{romanov2020routing}. Monakhova, Monakhov, and Romanov proposed routing algorithms for optimal degree-four circulant networks based on relative addressing, which is particularly relevant because their approach uses algebraic and coordinate-based routing ideas in degree-four circulant networks~\cite{monakhova2022relative}. Song et al. studied Gaussian-based optical networks-on-chip and reported performance advantages related to diameter and hop distance, further supporting the use of Gaussian-based topologies in modern communication architectures~\cite{song2020gaussian}.

The present paper differs from these prior works in its objective. Existing work mainly studies topology, routing, broadcasting, or general fault-tolerant communication. Recent re-rooting-based broadcasting work showed that a broadcast in a dense Gaussian network can be recovered by selecting a valid new source when faults occur~\cite{albader2026rerooting}. However, the new-source selection step remained primarily an existence or search-based step. This paper addresses that gap by deriving a closed-form formula for the number of valid new sources and by developing a constant-time algebraic method for selecting one.

\section{Preliminaries and Network Model}
\label{sec:preliminaries}

This section introduces the notation and basic properties used throughout the paper. The network considered in this work is the dense Gaussian network $G_k$, where $k$ is a positive integer parameter. The node set of $G_k$ can be represented by Gaussian integer coordinates with wrap-around equivalence. The total number of nodes is
\[
N = k^2 + (k+1)^2.
\]
The graph has diameter $k$, meaning that every node can be reached from any other node in at most $k$ steps.

\subsection{Node Representation}

A node in $G_k$ is represented by an ordered pair
\[
A=(a_x,a_y),
\]
which corresponds to the Gaussian integer
\[
a_x + a_y i.
\]
Throughout the paper, we use coordinate notation and Gaussian-integer notation interchangeably whenever the meaning is clear.

Because $G_k$ is a wrap-around network, different coordinate pairs may represent the same node. Therefore, whenever a coordinate expression produces a point outside the chosen representative region of $G_k$, it is reduced modulo the Gaussian network. We denote this reduction by
\[
\operatorname{mod}_{G_k}(\cdot).
\]

\subsection{Graph Distance}

For two nodes \(U,V\in G_k\), the graph distance between them is denoted by
\[
d(U,V).
\]
This distance is the minimum number of graph edges required to move from \(U\) to \(V\) in \(G_k\). Since \(G_k\) has diameter \(k\), we have
\[
d(U,V)\leq k
\]
for all nodes \(U,V\in G_k\).

Because \(G_k\) is a wrap-around network, a coordinate difference may have several equivalent Gaussian-integer representatives. Throughout this paper, whenever a difference node is written as
\[
X=\operatorname{mod}_{G_k}(U-V)=(x,y),
\]
we assume that $\operatorname{mod}_{G_k}(\cdot)$ returns the canonical reduced representative whose Manhattan length is minimum among all equivalent representatives. Therefore, the graph distance from the origin is computed by
\[
d(0,X)=|x|+|y|.
\]
Equivalently,
\[
d(U,V)=\left|x\right|+\left|y\right|,
\quad
\text{where }
(x,y)=\operatorname{mod}_{G_k}(U-V).
\]

Thus, a reduced node \(X=(x,y)\) lies on the \(k\)-step boundary of the origin precisely when
\[
|x|+|y|=k.
\]

The distance function is translation invariant. That is, for any three nodes \(U,V,T\in G_k\),
\[
d(U,V)
=
d\!\left(
\operatorname{mod}_{G_k}(U+T),
\operatorname{mod}_{G_k}(V+T)
\right).
\]
Equivalently,
\[
d(U,V)
=
d\!\left(
\operatorname{mod}_{G_k}(U-V),
0
\right).
\]
This property is essential for reducing the two-fault new-source problem to the case where one faulty node is the origin.

\subsection{Boundary Nodes}

The boundary of radius $k$ around the origin is defined as
\[
\partial G_k(0)=\{P\in G_k : d(0,P)=k\}.
\]
More generally, the boundary of radius $k$ around a node $A$ is
\[
\partial G_k(A)=\{P\in G_k : d(A,P)=k\}.
\]

Since the diameter of $G_k$ is $k$, boundary nodes are the nodes that are farthest from the center node. In the context of re-rooting-based broadcasting, such farthest nodes are important because they preserve the maximum-distance structure needed for full broadcast recovery.

For the origin, the number of boundary nodes is
\[
|\partial G_k(0)|=4k.
\]
These nodes form the set of possible farthest nodes from the origin.

\subsection{Valid New Sources}

Let $A\in G_k$ be a node. A node $P\in G_k$ is called a valid new source with respect to $0$ and $A$ if
\[
d(0,P)=k
\]
and
\[
d(A,P)=k.
\]
Thus, $P$ must lie in the intersection of two boundary sets:
\[
P\in \partial G_k(0)\cap \partial G_k(A).
\]
The number of such nodes is denoted by
\[
\operatorname{count}(A,k)
=
\left|
\partial G_k(0)\cap \partial G_k(A)
\right|.
\]

For two arbitrary faulty nodes $A,B\in G_k$, a node $NS\in G_k$ is called a valid new source if
\[
d(NS,A)=k
\]
and
\[
d(NS,B)=k.
\]
The goal of the direct-selection problem is to find such an $NS$ without scanning all nodes of $G_k$.

\subsection{Difference Node for Two Faults}

Given two faulty nodes $A,B\in G_k$, define the difference node
\[
C=\operatorname{mod}_{G_k}(B-A).
\]
If
\[
C=(c,d),
\]
then the problem of finding a node $NS$ satisfying
\[
d(NS,A)=k
\]
and
\[
d(NS,B)=k
\]
can be translated into the equivalent problem of finding a node $P$ satisfying
\[
d(P,0)=k
\]
and
\[
d(P,C)=k.
\]
Once such a point $P$ is found, the required new source is obtained by shifting back:
\[
NS=\operatorname{mod}_{G_k}(A+P).
\]

This translation is valid because of distance translation invariance. Indeed,
\[
d(NS,A)
=
d\bigl(\operatorname{mod}_{G_k}(A+P),A\bigr)
=
d(P,0),
\]
and since $B=A+C$ in $G_k$,
\[
d(NS,B)
=
d\bigl(\operatorname{mod}_{G_k}(A+P),\operatorname{mod}_{G_k}(A+C)\bigr)
=
d(P,C).
\]
Therefore, if $P$ is at distance $k$ from both $0$ and $C$, then $NS$ is at distance $k$ from both faulty nodes $A$ and $B$.

\subsection{Integer Labeling}

The coordinate representation can also be connected to an integer labeling modulo $N$. Let
\[
\phi:G_k\rightarrow \mathbb{Z}_N
\]
be defined by
\[
\phi(x+yi)\equiv kx+(k+1)y \pmod{N},
\]
where
\[
N=k^2+(k+1)^2.
\]
This mapping assigns an integer label modulo $N$ to each node of $G_k$. Moreover, coordinate differences are preserved under the labeling in the sense that
\[
\phi(U-V)\equiv \phi(U)-\phi(V)\pmod{N}.
\]
This algebraic representation is useful when implementing the proposed direct-selection method, since the difference node $C=B-A$ can be computed either in coordinate form or through the corresponding modular labels.

\section{Counting the Number of Valid New Sources}
\label{sec:counting}

This section counts the valid new sources that are simultaneously at maximum graph distance from the origin and from a given node. Let $G_k$ be a dense Gaussian network of diameter $k$, and let
\[
A=(a_x,a_y)\in G_k.
\]
A node $P\in G_k$ is valid with respect to $0$ and $A$ when
\[
d(0,P)=k,
\qquad
 d(A,P)=k.
\]
Equivalently,
\[
P\in \partial G_k(0)\cap\partial G_k(A).
\]
We denote the number of such nodes by
\[
\operatorname{count}(A,k)=
\left|\{P\in G_k:d(0,P)=k,\ d(A,P)=k\}\right|.
\]

The important point is that the second boundary is taken in the quotient network. Hence, a boundary intersection may occur not only with the central translated copy $A+\partial G_k(0)$, but also with one of its neighboring quotient-lattice copies. The counting method below keeps the compact piecewise expression for the local intersection and then gives the exact quotient-lattice-aware formula used in validation.

\subsection{Boundary Partition}

Let
\[
D_k=\{(x,y)\in\mathbb Z^2: |x|+|y|\le k\}
\]
be the canonical diamond of representatives. Its boundary is
\[
\partial D_k=\{(x,y)\in\mathbb Z^2: |x|+|y|=k\}.
\]
We partition $\partial D_k$ into four disjoint directed sides:
\[
S_1=\{(t,k-t):0\le t\le k\},
\]
\[
S_2=\{(-t,k-t):1\le t\le k\},
\]
\[
S_3=\{(-t,-k+t):0\le t\le k-1\},
\]
\[
S_4=\{(t,-k+t):1\le t\le k-1\}.
\]
These sets are disjoint and have sizes $k+1$, $k$, $k$, and $k-1$, respectively. Therefore,
\[
|\partial D_k|=4k.
\]
This partition is used only as a counting device; distances are still interpreted in the Gaussian quotient network.

\begin{lemma}
\label{lem:boundary-intersection}
A node $P$ is a valid new source with respect to $0$ and $A$ if and only if
\[
P\in \partial G_k(0)\cap \partial G_k(A).
\]
Equivalently, after translating by $A$,
\[
P\in\partial G_k(0),
\qquad
P-A\in\partial G_k(0).
\]
\end{lemma}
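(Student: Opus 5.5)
The first claim is just the definition of a valid new source unpacked, so I will prove it by two inclusions. By the definition in the Preliminaries, $P$ is valid with respect to $0$ and $A$ exactly when $d(0,P)=k$ and $d(A,P)=k$. Next I rewrite each condition using the boundary sets. By the definition of $\partial G_k(0)$, the first condition says $P\in\partial G_k(0)$. By the definition of $\partial G_k(A)$, the second says $P\in\partial G_k(A)$. Both implications are reversible, so the two conditions together are equivalent to $P\in\partial G_k(0)\cap\partial G_k(A)$.

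For the translated form I will use the translation invariance of distance stated in the Preliminaries, in the form
\[
d(U,V)=d\bigl(\operatorname{mod}_{G_k}(U-V),0\bigr).
\]
Taking $U=P$ and $V=A$ gives $d(A,P)=d(\operatorname{mod}_{G_k}(P-A),0)$. The distance is symmetric because the graph is undirected, so this equals $d(0,\operatorname{mod}_{G_k}(P-A))$. Hence $d(A,P)=k$ if and only if $P-A$, read as a node of $G_k$ through its reduced representative, lies in $\partial G_k(0)$. The condition $P\in\partial G_k(0)$ is unchanged. This gives the second equivalence.

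The only point needing care is well-definedness. The expression $P-A\in\partial G_k(0)$ must be read in the quotient, meaning that $\operatorname{mod}_{G_k}(P-A)$ has Manhattan length $k$. Membership in $\partial G_k(0)$ depends only on the node, not on the chosen representative, because $d$ is defined on nodes. The minimal-Manhattan-length convention for $\operatorname{mod}_{G_k}$ then makes $d(0,X)=|x|+|y|$ for the canonical representative $(x,y)$. With this remark the equivalence holds exactly, including for wrap-around intersections. This is also why the later count must consider neighbouring quotient-lattice copies of $A+\partial D_k$ rather than only the central copy.
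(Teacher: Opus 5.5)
Your proof is correct and follows the paper's argument. You unpack the definition of validity, identify $d(0,P)=k$ with $P\in\partial G_k(0)$, and use translation invariance (plus symmetry of $d$) to rewrite $d(A,P)=k$ as $P-A\in\partial G_k(0)$; your added remark that $P-A$ is read through $\operatorname{mod}_{G_k}$ and that boundary membership does not depend on the representative is a clarification the paper leaves implicit.
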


\begin{proof}
By definition, validity means $d(0,P)=k$ and $d(A,P)=k$. The first condition is exactly $P\in\partial G_k(0)$. By translation invariance of graph distance in $G_k$,
\[
d(A,P)=d(0,P-A).
\]
Thus, $d(A,P)=k$ if and only if $P-A\in\partial G_k(0)$, or equivalently $P\in A+\partial G_k(0)=\partial G_k(A)$.
\end{proof}

\subsection{Local Piecewise Boundary Count}
\label{subsec:local-piecewise-count}

Before incorporating quotient-lattice copies, consider the local intersection obtained from the central copy only. Define
\[
\operatorname{count}_{0}(A,k)=
\left|\partial D_k\cap (A+\partial D_k)\right|.
\]
Let
\[
a=\max\{|a_x|,|a_y|\},
\qquad
b=\min\{|a_x|,|a_y|\}.
\]
Solving the sixteen side-pair intersections between the four sides of $\partial D_k$ and the four sides of $A+\partial D_k$ gives the following compact local count:
\[
\operatorname{count}_{0}(A,k)=
\begin{cases}
4k, & a=b=0,\\[3pt]
2k+2-a, & a=b,\\[3pt]
2k+1, & (a,b)=(1,0),\\[3pt]
k+4-a+I(a_xa_y<0),
&
\begin{array}{l}
a=b+1,\\
(a,b)\ne(1,0),
\end{array}
\\[6pt]
4, & a\ge b+2.
\end{cases}
\]
Here $I(\cdot)$ is $1$ when the condition is true and $0$ otherwise.

The five cases in this expression correspond to the five possible geometric overlap patterns between the two unshifted diamond boundaries. When $a=b=0$, the two centers coincide, so the two boundaries are identical and all $4k$ boundary nodes are counted. When $a=b>0$, the shift lies on a diagonal direction; in this case two long side-overlap intervals remain, and their combined length decreases linearly as the diagonal offset $a$ increases, giving $2k+2-a$. The adjacent-axis case $(a,b)=(1,0)$ is separated because the axis shift creates an additional endpoint contact that is not captured by the neighboring near-diagonal expression. When $a=b+1$, the two diamonds are one step away from diagonal alignment; the overlap consists of one main interval and several endpoint contacts, with one extra endpoint appearing exactly when $a_x$ and $a_y$ have opposite signs. Finally, when $a\ge b+2$, all interval overlaps disappear and only the four corner-to-side endpoint intersections remain.

\begin{table}[H]
\centering
\caption{Local unshifted boundary-intersection count by position type of $A$.}
\label{tab:side-intersection-count}
\begin{tabular}{c|c}
\hline
Condition on $A$ & Local count $\operatorname{count}_{0}(A,k)$ \\
\hline
$a=0,\ b=0$ & $4k$ \\
$a=b$ & $2k+2-a$ \\
$(a,b)=(1,0)$ & $2k+1$ \\
$a=b+1,\ (a,b)\ne(1,0)$ & $k+4-a+I(a_xa_y<0)$ \\
$a\ge b+2$ & $4$ \\
\hline
\end{tabular}
\end{table}

Table~\ref{tab:side-intersection-count} is useful because it explains the geometry of the boundary intersection. It also shows that, even before considering wrap-around copies, every nonzero node has at least four local candidates. However, the exact count in $G_k$ must include quotient-lattice copies, as described next.

\begin{figure}[H]
\centering
\includegraphics[width=0.6\linewidth]{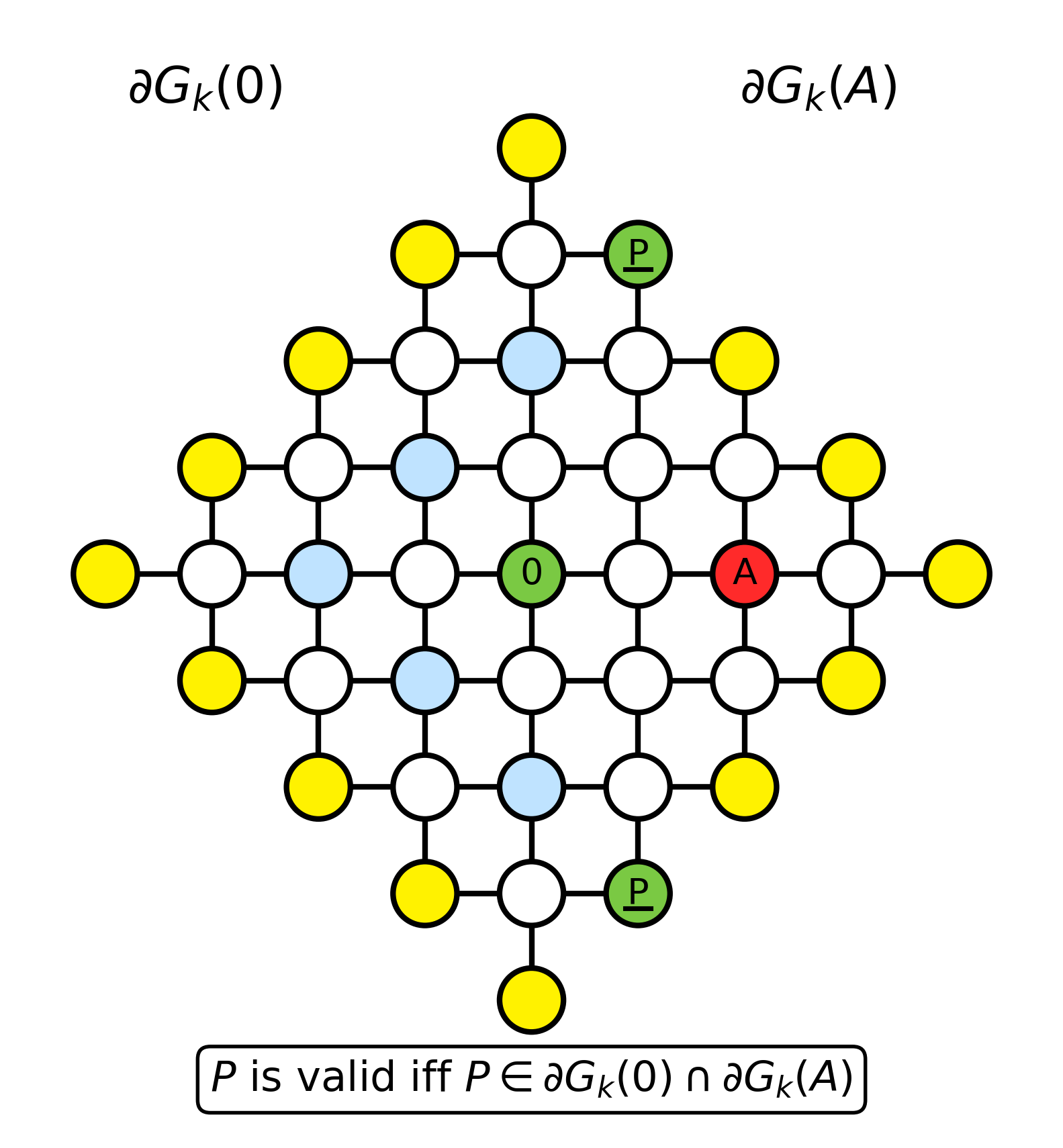}
\caption{Boundary-intersection view of valid new sources. Yellow nodes represent the boundary $\partial G_k(0)$, light-blue nodes represent the shifted boundary $\partial G_k(A)$, and green underlined nodes represent valid new sources in $\partial G_k(0)\cap\partial G_k(A)$.}
\label{fig:boundary_intersection}
\end{figure}

\subsection{Exact Quotient-Lattice-Aware Count}
\label{subsec:quotient-count}

The quotient lattice generated by $k+(k+1)i$ has basis vectors
\[
e_1=(k,k+1),
\qquad
e_2=(-(k+1),k).
\]
Only the central copy and the eight adjacent quotient-lattice copies can intersect the canonical boundary. Therefore, define the fixed shift set
\[
\mathcal L_k=
\{m e_1+n e_2: m,n\in\{-1,0,1\}\}.
\]
For each side $S_i$, let $P_i(t)$ be its linear parameterization and let $R_i$ be its integer parameter range. For a fixed shift $L\in\mathcal L_k$ and a side pair $(i,j)$, define
\[
T_{ij}(A,L)=
\{t\in R_i: P_i(t)=A+L+P_j(u)
\text{ for some }u\in R_j\}.
\]
Each set $T_{ij}(A,L)$ is either empty, a single integer, or an integer interval obtained by solving two linear equations in $t$ and $u$. Since multiple quotient copies can represent the same boundary node, duplicate parameter values on each side are removed by interval union.

\begin{theorem}
\label{thm:counting-formula}
For every node $A\in G_k$, the exact number of valid new sources in the Gaussian quotient network is
\[
\operatorname{count}(A,k)=
\sum_{i=1}^{4}
\left|
\bigcup_{L\in\mathcal L_k}
\bigcup_{j=1}^{4}
T_{ij}(A,L)
\right|.
\]
The formula uses a fixed number of side-pair interval computations and therefore gives an $O(1)$ counting procedure with respect to $k$ and $N$.
\end{theorem}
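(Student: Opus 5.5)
The plan is to establish a bijection between valid new sources and the disjoint union, over sides $i$, of the parameter sets $\bigcup_{L,j}T_{ij}(A,L)$. Then the bound on the number of computations follows directly.

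\textbf{Step 1: canonical representatives.} I would first record that every node of $G_k$ other than those realized only at distance $<k$ has a representative in $D_k$. Nodes at distance exactly $k$ from $0$ correspond to the points of $\partial D_k$. This needs two facts. First, $|D_k|=2k^2+2k+1=N$. Second, the translates $D_k+\Lambda$, where $\Lambda=\mathbb Z e_1+\mathbb Z e_2$, tile $\mathbb Z^2$; this is the standard perfect-code tiling of dense Gaussian networks. Together they show that reduction modulo $\Lambda$ restricted to $D_k$ is a bijection onto $G_k$, and that the minimum-Manhattan representative is the one in $D_k$. Hence $P\mapsto$ (its representative) identifies $\partial G_k(0)$ with $\partial D_k$. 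Because $S_1,\dots,S_4$ partition $\partial D_k$ with injective parameterizations $P_i$ on $R_i$, each valid $P$ corresponds to a unique pair $(i,t)$ with $t\in R_i$.

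\textbf{Step 2: membership criterion.} By Lemma~\ref{lem:boundary-intersection}, $P$ is valid if and only if, in addition, $P-A\in\partial G_k(0)$. By Step 1, this holds if and only if $P-A\equiv Q \pmod{\Lambda}$ for some $Q\in\partial D_k$. Equivalently, $P=A+L+P_j(u)$ for some $L\in\Lambda$, some side $j$, and some $u\in R_j$.

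The main obstacle is to show that $L$ can be restricted to $\mathcal L_k$. Here I would take $A$ in its canonical representative, so $|a_x|+|a_y|\le k$. Then $P-A-P_j(u)=L$ has Manhattan norm at most $3k$. I would then check that every nonzero lattice vector outside $\mathcal L_k$ has norm larger than $3k$. The vectors in $\mathcal L_k\setminus\{0\}$ are $\pm e_1$, $\pm e_2$ (norm $2k+1$) and $\pm(e_1\pm e_2)$. Here $e_1+e_2=(-1,2k+1)$ and $e_1-e_2=(2k+1,1)$, both of norm $2k+2$. Any other lattice vector $me_1+ne_2$ with $\max(|m|,|n|)\ge2$ has norm at least roughly $4k$. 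I would verify this via the identity $\|me_1+ne_2\|_1\ge \|\cdot\|_2$, since $\|me_1+ne_2\|_2^2=(m^2+n^2)N$. This gives norm $\ge\sqrt{5N}>3k$ once $(m,n)\notin\{-1,0,1\}^2$, and small $k$ can be checked directly. This confirms that only the nine shifts can contribute.

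\textbf{Step 3: counting and complexity.} From Steps 1 and 2, the set of parameters $t\in R_i$ whose point $P_i(t)$ is valid is exactly $\bigcup_{L\in\mathcal L_k}\bigcup_j T_{ij}(A,L)$. Taking the union removes the multiplicity that arises when the same node is hit through different $(L,j)$. Since the sides are disjoint, summing the cardinalities over $i$ gives $\operatorname{count}(A,k)$.

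For the complexity claim, each $T_{ij}(A,L)$ comes from the vector equation $P_i(t)=A+L+P_j(u)$. This is two linear equations in $(t,u)$ with coefficients $\pm1$. If the sides are nonparallel, the solution is a unique rational point, which is kept if it is integral and lies in range. If the sides are parallel, the system is either inconsistent or reduces to $u=\pm t+c$, which gives an interval intersected with $R_i$. Each side thus receives at most $9\cdot4=36$ intervals, and their union's size is computed by sorting a constant number of endpoints. The total work is $4\cdot36=144$ constant-time solves plus constant-size unions, which is $O(1)$ independent of $k$ and $N$.
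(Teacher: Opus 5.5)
Your argument follows the paper's route. You identify $\partial G_k(0)$ with $\partial D_k$, write membership in $\partial G_k(A)$ as $P=A+L+P_j(u)$, and bound $\|L\|_1\le 3k$ using a canonical $A$; this is the same bound as in Lemma~\ref{lem:nine-shifts}. You then take unions of parameter sets side by side. Your Step~1 (the tiling of $\mathbb Z^2$ by $D_k+\Lambda$ together with $|D_k|=N$) usefully makes explicit what the paper simply assumes about canonical representatives.

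One step is wrong as written. It is not true that $m^2+n^2\ge 5$ whenever $(m,n)\notin\{-1,0,1\}^2$: the pairs $(m,n)=(\pm2,0)$ and $(0,\pm2)$ give $m^2+n^2=4$. For these vectors your lower bound is only $\|L\|_1\ge\|L\|_2=2\sqrt N=2\sqrt{2k^2+2k+1}$. That falls below $3k$ as soon as $k\ge 9$, since $8k^2+8k+4<9k^2$ there. So the $\ell_2$ estimate cannot exclude $\pm2e_1$ and $\pm2e_2$, and the failure is at large $k$, not at small $k$ as your closing remark suggests.

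The repair is one line. Compute directly $\|\pm2e_1\|_1=\|(2k,2k+2)\|_1=4k+2>3k$, and likewise $\|\pm2e_2\|_1=4k+2$. For all remaining pairs, $m^2+n^2\ge5$ gives $\|L\|_1\ge\sqrt{5N}>3k$ for every $k\ge1$, because $10k^2+10k+5>9k^2$, so no separate small-$k$ check is needed. With that correction your proof is complete and matches the paper's.
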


\begin{proof}
By Lemma~\ref{lem:boundary-intersection}, valid new sources are exactly the nodes in $\partial G_k(0)\cap\partial G_k(A)$. In the coordinate cover, $\partial G_k(A)$ is represented by the shifted copies $A+L+\partial D_k$, where $L$ ranges over quotient-lattice vectors. A copy can intersect the canonical boundary $\partial D_k$ only if it is one of the nine shifts in $\mathcal L_k$; all other copies are separated from $D_k$ by more than the boundary diameter in at least one lattice direction. Thus, every valid intersection is represented by some side pair $S_i$ and $A+L+S_j$ with $L\in\mathcal L_k$. For each fixed $(i,j,L)$, equating the two side parameterizations gives a pair of linear equations in $t$ and $u$, whose valid solutions form $T_{ij}(A,L)$. Taking the union over all $j$ and $L$ for each side $S_i$ removes duplicate representations, and summing over the four disjoint sides counts each boundary node exactly once.
\end{proof}

\begin{corollary}
\label{cor:min-four}
For every nonzero node $A\in G_k$,
\[
\operatorname{count}(A,k)\ge4.
\]
\end{corollary}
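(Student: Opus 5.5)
One plausible route is to show that the central copy $L=0$ already supplies at least four valid sources, using the local piecewise count of Table~\ref{tab:side-intersection-count}. By Theorem~\ref{thm:counting-formula}, $\operatorname{count}(A,k)$ is the size of the union, over all $L\in\mathcal L_k$ and all side pairs, of the sets $T_{ij}(A,L)$. The $L=0$ term alone is exactly the set counted by $\operatorname{count}_0(A,k)$, namely the points of $\partial D_k\cap(A+\partial D_k)$. One caveat must be handled: a point lying in $\partial D_k$ with $P-A\in\partial D_k$ need not satisfy $d(A,P)=k$ in the quotient, since $P-A$ might have a shorter equivalent representative. So a local intersection point does not automatically count as a valid source.

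To avoid this, I would not use the table wholesale. Instead I would exhibit four explicit points and verify each directly. Take $A$ in its canonical reduced form $(a_x,a_y)$ with $|a_x|+|a_y|\le k$, and $A\neq 0$. Using the symmetries of $G_k$ (multiplication by $i$ and complex conjugation, both of which preserve the quotient lattice and $d$), I may assume $a_x\ge a_y\ge 0$ where convenient. The natural candidates are the four "corner-to-side" points behind the $a\ge b+2$ case. Concretely, each point has the form $P=(x,y)$ on a side of $\partial D_k$ and satisfies $(x-a_x)+(y-a_y)=\pm k$ or $(x-a_x)-(y-a_y)=\pm k$ with compatible signs. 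Solving the two linear equations gives $x,y$ as half-integer combinations of $k,a_x,a_y$. Parity issues arise, so I would pick the sides so that each system is of "corner" type. An example is $P=(a_x+k-\cdot,\dots)$ on a side meeting the opposite side of the translated diamond at an endpoint.

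The key verification is that $d(0,P)=k$ and $d(0,P-A)=k$ hold in the quotient and not merely in $\mathbb Z^2$. For this I would use the standard fact that the reduced representatives in $D_k$ are exactly the minimum-length ones. Consequently any $X\in\partial D_k$ has $d(0,X)=k$, because the diameter is $k$ and $D_k$ contains exactly $2k^2+2k+1=N$ points, so distinct points of $D_k$ are inequivalent. This is the crucial observation: $|D_k|=N$, so $D_k$ is a complete residue system. Hence every $X\in\partial D_k$ has Manhattan-minimal representative length exactly $k$, and the caveat above disappears. Thus $\operatorname{count}(A,k)\ge\operatorname{count}_0(A,k)$ for points of the central copy, after removing duplicates. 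Distinct points of $\partial D_k$ are distinct nodes, so no duplicates need to be removed.

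From there, the corollary reduces to $\operatorname{count}_0(A,k)\ge4$ for $A\ne0$, which I would read off Table~\ref{tab:side-intersection-count}. The case values are $2k+2-a\ge k+2\ge4$ when $k\ge2$, $2k+1\ge3$ in the $(1,0)$ case, $k+4-a\ge4$ since $a\le k$, and $4$ otherwise. The main obstacle is the small edge cases, where these bounds fall below $4$. For $k=1$ and the $(1,0)$ case, $2k+1=3$ and the bound fails; for $k=1$ and the diagonal case $a=b=1$, the bound is $2k+2-a=3$. There I expect to need a contribution from a non-central shift $L\neq0$, or to check $G_1$ (five nodes) directly. For $k\ge2$ I would also double-check the $a=b+1$ case when $a=k$.
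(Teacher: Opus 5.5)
Your route is the paper's route: the central copy, then Table~\ref{tab:side-intersection-count}. Your first link is correct and more careful than the paper's. Diameter $k$ means every residue class meets $D_k$, and $|D_k|=2k^2+2k+1=N$ then forces $D_k$ to be a complete residue system, so $\operatorname{count}(A,k)\ge\operatorname{count}_0(A,k)$.

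The gap is the second link, $\operatorname{count}_0(A,k)\ge4$. It is false in general, so it cannot be read off the table; the paper's one-line proof makes the same unjustified step. Since $|x|+|y|\equiv x+y$ and $|x-a_x|+|y-a_y|\equiv x+y-a_x-a_y\pmod 2$, the equations $|x|+|y|=k=|x-a_x|+|y-a_y|$ have no integer solution when $a_x+a_y$ is odd. Hence $\operatorname{count}_0(A,k)=0$ for every $A$ in the rows $(a,b)=(1,0)$ and $a=b+1$; for $A=(1,0)$ they force $|x|=|x-1|$. The row $a\ge b+2$ is also wrong locally: for $A=(2,0)$ they force $x=1$, leaving only the two points $(1,\pm(k-1))$.

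The missing idea is that the bound must come from the non-central shifts $L=me_1+ne_2\neq0$. Because $e_1$ and $e_2$ have coordinate sums $2k+1$ and $-1$, the unique lattice vector $L$ with $P-A-L\in\partial D_k$ satisfies $m+n\equiv a_x+a_y\pmod 2$. So when $a_x+a_y$ is odd, every valid source is non-central. For instance, for $k=3$ and $A=(2,0)$ the exact count is $4$. It consists of the local points $(1,\pm2)$ together with $(-2,-1)$ and $(-3,0)$, both certified by $L=-e_1+e_2=(-7,-1)$.

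A proof therefore has to exhibit four points with explicit certifying shifts. When normalizing $A$, only rotations are allowed. Multiplication by $i$ is an automorphism of $G_k$, but complex conjugation is not, since it sends the ideal $(k+(k+1)i)$ to $(k-(k+1)i)$. Indeed $\operatorname{count}((2,1),3)=5$ while $\operatorname{count}((2,-1),3)=6$.

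Second, the case $k=1$ that you flagged is not a removable edge case but a counterexample. Neither a non-central contribution nor a direct check can rescue it. With $\phi(x+yi)=x+2y\pmod 5$, $G_1$ is the circulant $C_5(1,2)\cong K_5$. Hence $\partial G_1(0)$ consists of the four nonzero nodes, $A$ is one of them and is not valid, and $\operatorname{count}(A,1)=3$ for every $A\neq0$. Your diagonal subcase $a=b=1$ does not even arise at $k=1$, since $(1,1)\notin D_1$.

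The corollary therefore needs the hypothesis $k\ge2$. For $k=2,3,4$ a direct computation gives minimum exactly $4$, attained for example at $A=(2,0)$, so the restricted statement is plausible. It has to be proved through the shifted copies, not through the local table.
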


\begin{proof}
The exact quotient count contains the local central-copy contribution. From the local piecewise count in Table~\ref{tab:side-intersection-count}, every nonzero node has at least four local boundary intersections. Therefore, the exact quotient count is also at least four.
\end{proof}

\subsection{Examples for $k=5$}

For $k=5$, the boundary of the origin contains $4k=20$ nodes. The local count gives
\[
\operatorname{count}_{0}((0,0),5)=20,
\]
\[
\operatorname{count}_{0}((1,0),5)=2(5)+1=11,
\]
\[
\operatorname{count}_{0}((1,1),5)=2(5)+(2-1)=11,
\]
and
\[
\operatorname{count}_{0}((2,-1),5)=5+(4-2)+1=8.
\]
The quotient-lattice-aware formula in Theorem~\ref{thm:counting-formula} evaluates the exact count in $G_k$ by adding all valid shifted-copy intersections and removing duplicates.

\section{Direct New-Source Selection for Two Faulty Nodes}
\label{sec:direct-selection}

The previous section showed how valid new sources can be counted without scanning the network. This section gives a direct construction for selecting one such source for two arbitrary distinct faulty nodes.

Let
\[
A=(a_1,a_2),
\qquad
B=(b_1,b_2)
\]
be two distinct faulty nodes in $G_k$. We seek $NS\in G_k$ such that
\[
d(NS,A)=d(NS,B)=k.
\]
If $A=B$, the problem reduces to the one-fault case, where any node on $\partial G_k(A)$ is valid. Thus, the nontrivial case is $A\ne B$.

\subsection{Translation to the Origin}

Define the difference node
\[
C=\operatorname{mod}_{G_k}(B-A)=(c,d).
\]
Instead of searching directly for $NS$, we first find $P=(p,q)$ satisfying
\[
d(P,0)=k,
\qquad
 d(P,C)=k.
\]
Then the required new source is
\[
NS=\operatorname{mod}_{G_k}(A+P).
\]

\begin{figure}[H]
\centering
\includegraphics[width=0.7\linewidth]{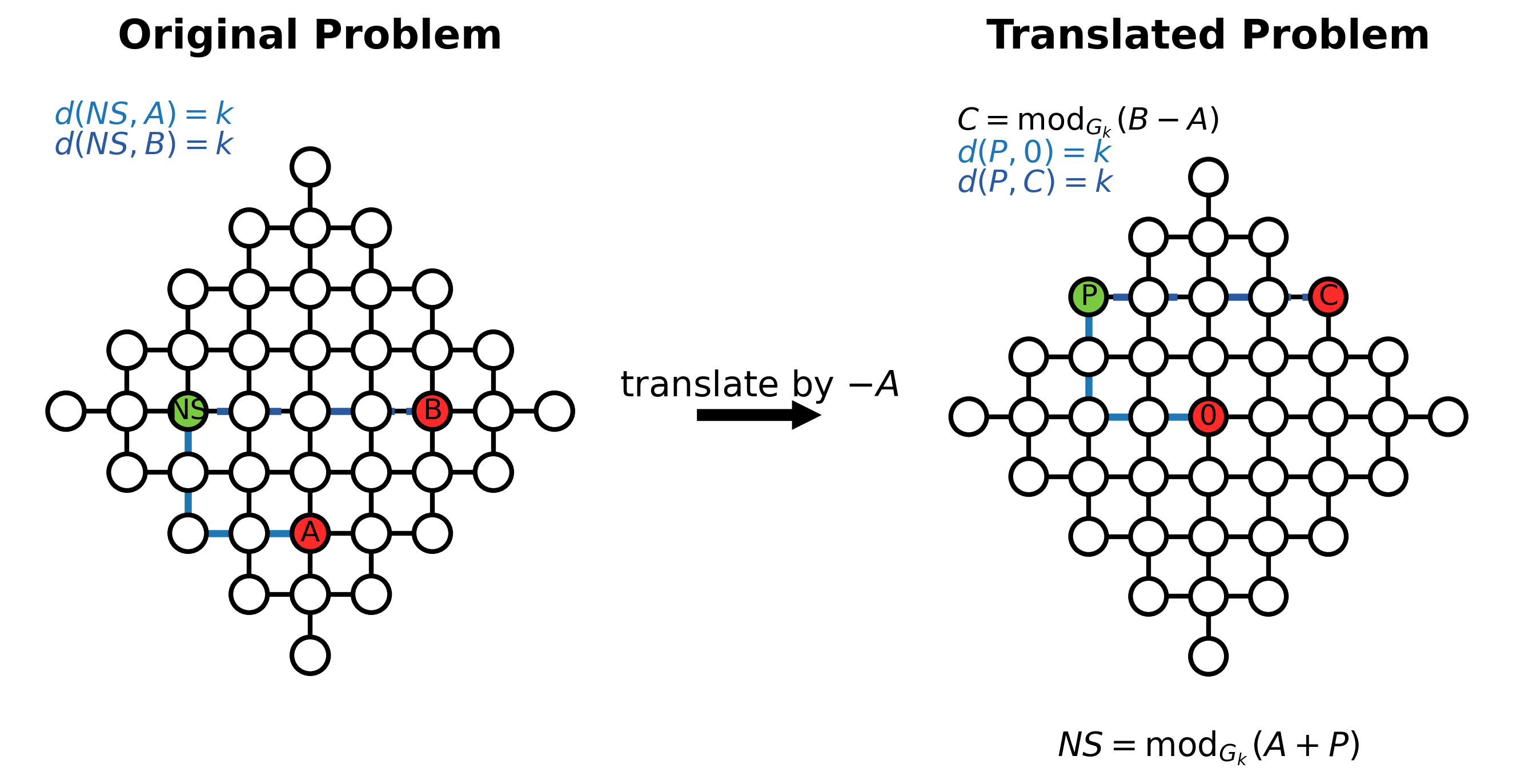}
\caption{Translation of the two-fault new-source problem. The original faulty-node pair $A,B$ is translated to $0,C$, where $C=\operatorname{mod}_{G_k}(B-A)$. After finding $P$ such that $d(P,0)=d(P,C)=k$, the source is shifted back as $NS=\operatorname{mod}_{G_k}(A+P)$.}
\label{fig:translation}
\end{figure}

\begin{lemma}
\label{lem:translation}
Let $A,B\in G_k$ and $C=\operatorname{mod}_{G_k}(B-A)$. If $P\in G_k$ satisfies $d(P,0)=d(P,C)=k$, then
\[
NS=\operatorname{mod}_{G_k}(A+P)
\]
satisfies $d(NS,A)=d(NS,B)=k$.
\end{lemma}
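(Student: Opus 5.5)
The plan is to reduce the statement directly to the translation invariance of graph distance in $G_k$ stated in the preliminaries. It also uses the fact that $\operatorname{mod}_{G_k}(\cdot)$ only selects a representative of a residue class. In other words, $\operatorname{mod}_{G_k}(X)$ and $X$ are the same node of the quotient $\mathbb{Z}[i]/\langle k+(k+1)i\rangle$. Throughout, I will treat nodes as elements of this quotient ring, so that addition and subtraction are well defined and $\operatorname{mod}_{G_k}$ is just a choice of representative.

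\textbf{Step 1: distance to $A$.} I will apply the invariance identity $d(U,V)=d(\operatorname{mod}_{G_k}(U+T),\operatorname{mod}_{G_k}(V+T))$ with $U=P$, $V=0$ and $T=A$. This gives
\[
d(P,0)=d\bigl(\operatorname{mod}_{G_k}(A+P),\operatorname{mod}_{G_k}(A)\bigr)=d(NS,A),
\]
because $A\in G_k$ is already a node, so $\operatorname{mod}_{G_k}(A)=A$. By hypothesis the left side equals $k$.

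\textbf{Step 2: distance to $B$.} I will apply the same identity with $U=P$, $V=C$ and $T=A$, which gives $d(P,C)=d(NS,\operatorname{mod}_{G_k}(A+C))$. It then remains to show $\operatorname{mod}_{G_k}(A+C)=B$. Since $C\equiv B-A$ in the quotient, we get $A+C\equiv B$ modulo the lattice generated by $k+(k+1)i$. Two congruent Gaussian integers represent the same node, so the reduced representative is $B$. This step can equally be checked through the labeling $\phi$: because $\phi$ is additive modulo $N$, we have $\phi(A+C)\equiv\phi(A)+\phi(B)-\phi(A)=\phi(B)$. Then $d(NS,B)=d(P,C)=k$.

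\textbf{Main obstacle.} The only delicate point is that $\operatorname{mod}_{G_k}$ must be well defined on residue classes. Its canonical output is the minimum-Manhattan-length representative, which could be ambiguous when several representatives tie. I will resolve this by noting that the graph distance $d$ depends only on the nodes themselves, that is, on the residue classes. Hence the choice of representative does not matter for any of the equalities above, and the two conclusions in Steps 1 and 2 complete the proof.
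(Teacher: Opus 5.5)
Your proof is correct and is essentially the paper's argument: you apply translation invariance with shift $A$ to get $d(NS,A)=d(P,0)$ and $d(NS,\operatorname{mod}_{G_k}(A+C))=d(P,C)$, then use $\operatorname{mod}_{G_k}(A+C)=B$ because $C\equiv B-A$. You justify that last identity more carefully than the paper, which simply asserts it; note that your optional $\phi$-check additionally needs $\phi$ to be injective on $G_k$, which holds because $\phi$ is a well-defined surjective additive map between two sets of size $N$.
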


\begin{proof}
By translation invariance of graph distance,
\[
d(\operatorname{mod}_{G_k}(A+P),A)=d(P,0).
\]
Also, since $B=\operatorname{mod}_{G_k}(A+C)$,
\[
d(\operatorname{mod}_{G_k}(A+P),B)=d(P,C).
\]
The result follows from $d(P,0)=d(P,C)=k$.
\end{proof}

\subsection{Shifted Algebraic Construction of $P$}
\label{subsec:shifted-direct-selector}

A purely unshifted construction would solve
\[
|p|+|q|=k,
\qquad
|p-c|+|q-d|=k.
\]
This is not sufficient in the quotient network because the second boundary may intersect the first boundary through a neighboring quotient-lattice copy. Therefore, we use the fixed shift set
\[
\mathcal L_k=\{m(k,k+1)+n(-(k+1),k):m,n\in\{-1,0,1\}\}.
\]
For each $L=(L_x,L_y)\in\mathcal L_k$, set
\[
c'=c+L_x,
\qquad
d'=d+L_y.
\]
The selector solves
\[
|p|+|q|=k,
\qquad
|p-c'|+|q-d'|=k.
\]

The absolute values are removed by considering the signs of
\[
p,\quad q,
\qquad
p-c',\quad q-d'.
\]
For a sign tuple $(s_1,s_2,s_3,s_4)\in\{\pm1\}^4$, the signed system is
\[
s_1p+s_2q=k,
\]
\[
s_3p+s_4q=M,
\qquad
M=k+s_3c'+s_4d'.
\]
Let
\[
D=s_1s_4-s_2s_3.
\]
If $D\ne0$, Cramer's rule gives
\[
p=\frac{ks_4-s_2M}{D},
\qquad
q=\frac{s_1M-ks_3}{D}.
\]
The candidate is accepted only if $p$ and $q$ are integers and the two graph-distance conditions hold.

If $D=0$, the two signed equations are parallel. When $(s_3,s_4)=(s_1,s_2)$, consistency requires $M=k$ and the first boundary line is parameterized by
\[
p=s_1t,
\qquad
q=s_2(k-t),
\qquad
0\le t\le k.
\]
The second boundary imposes
\[
\max(0,s_1c')\le t\le \min(k,k-s_2d').
\]
When $(s_3,s_4)=(-s_1,-s_2)$, consistency requires $M=-k$ and the valid interval is
\[
\max(0,k-s_2d')\le t\le \min(k,s_1c').
\]
In either parallel case, a valid endpoint is selected if the interval is nonempty.

\begin{lemma}
\label{lem:nine-shifts}
For a canonical difference node $C\in D_k$, any intersection between $\partial D_k$ and a quotient copy of $C+\partial D_k$ is represented by one of the nine shifts in $\mathcal L_k$.
\end{lemma}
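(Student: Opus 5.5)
The plan is a norm argument. I will show that any lattice vector $L$ that can occur is too short in Manhattan length to have a coefficient of absolute value at least $2$.

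First I set up the representation. Suppose a point $P\in\partial D_k$ lies on a quotient copy of $C+\partial D_k$, that is,
\[
P=C+L+Q
\]
for some $Q\in\partial D_k$ and some quotient-lattice vector $L=me_1+ne_2$ with $m,n\in\mathbb Z$. Solving for the shift gives $L=P-C-Q$. We have $|P|_1=|Q|_1=k$, and $C\in D_k$ gives $|C|_1\le k$. The triangle inequality for the $\ell_1$ norm then yields
\[
|L|_1\le 3k.
\]
So it suffices to prove that every lattice vector with $\max(|m|,|n|)\ge2$ has $\ell_1$ length strictly larger than $3k$.

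Next I compute the $\ell_1$ norm of a general lattice vector. Write $L=(x,y)$ with
\[
x=mk-n(k+1),\qquad y=m(k+1)+nk.
\]
Use the identity $|x|+|y|=\max(|x+y|,|y-x|)$. Substituting gives
\[
x+y=m(2k+1)-n,\qquad y-x=m+n(2k+1),
\]
so
\[
|L|_1=\max\bigl(|m(2k+1)-n|,\ |m+n(2k+1)|\bigr).
\]

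The key estimate comes next. Suppose $|m|\ge|n|$ and $|m|\ge 2$. Then
\[
|m(2k+1)-n|\ge |m|(2k+1)-|n|\ge |m|\cdot 2k\ge 4k>3k.
\]
If instead $|n|\ge|m|$ and $|n|\ge2$, the second term gives $|m+n(2k+1)|\ge 4k$ by the same estimate. Either way $|L|_1\ge 4k>3k$, which contradicts the bound from the first step. Hence $m,n\in\{-1,0,1\}$, so $L\in\mathcal L_k$.

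Finally, I need to address the meaning of ``represented by one of the nine shifts''. Any intersecting copy has been shown to have its shift in $\mathcal L_k$, so the intersection point is captured by that shift. This needs no uniqueness of the shift; duplicates are handled by the interval union, as in Theorem~\ref{thm:counting-formula}.

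The only delicate point I expect is the choice of norm. A Euclidean bound $|L|_2=\sqrt{(m^2+n^2)N}$ gives only about $2\sqrt2\,k<3k$, so it does not close the gap. This is why I will use the exact $\ell_1$ rotation identity $|x|+|y|=\max(|x+y|,|y-x|)$, which produces the margin $4k$ versus $3k$.
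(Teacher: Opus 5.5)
Your proof is correct and follows the paper's route: the triangle inequality gives $\|L\|_1\le 3k$, and you then show that every lattice shift with a coefficient of absolute value at least $2$ has Manhattan length greater than $3k$. The paper only asserts a ``direct case check on the signs of $m$ and $n$'' for that last step, whereas your identity $|x|+|y|=\max(|x+y|,|y-x|)$ turns it into an explicit two-line estimate giving $\|L\|_1\ge 4k$.
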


\begin{proof}
Let
\begin{align*}
e_1&=(k,k+1),\\
e_2&=(-(k+1),k),\\
L&=m e_1+n e_2 .
\end{align*}
Suppose that a shifted copy contributes an intersection. Then there exist $P,Q\in\partial D_k$ such that
\[
P=C+L+Q.
\]
Thus,
\[
C+L=P-Q.
\]
Since $P,Q\in D_k$, we have $\|P\|_1\le k$ and $\|Q\|_1\le k$, and hence
\[
\|C+L\|_1=\|P-Q\|_1\le \|P\|_1+\|Q\|_1\le 2k .
\]
Also, because $C$ is the canonical representative of a node in $G_k$, $C\in D_k$ and therefore $\|C\|_1\le k$. The reverse triangle inequality then gives the necessary condition
\[
\|L\|_1\le \|C+L\|_1+\|C\|_1\le 3k .
\]
Now compute the Manhattan length of a quotient-lattice shift:
\[
\|L\|_1
=
|mk-n(k+1)|+|m(k+1)+nk|.
\]
A direct case check on the signs of $m$ and $n$ shows that if $(m,n)$ is outside the block $\{-1,0,1\}^2$, then
\[
|mk-n(k+1)|+|m(k+1)+nk|>3k .
\]
Therefore, any quotient copy that can intersect the canonical boundary must have $m,n\in\{-1,0,1\}$, which gives exactly the central copy and the eight adjacent copies.
\end{proof}

\begin{algorithm}[H]
\caption{Shifted Constant-Time Direct New-Source Selection}
\label{alg:direct-selection}
\begin{algorithmic}[1]
\REQUIRE Network parameter $k$; faulty nodes $A,B\in G_k$
\ENSURE A valid new source $NS$
\STATE $C\gets \operatorname{mod}_{G_k}(B-A)$; write $C=(c,d)$
\STATE $e_1\gets(k,k+1)$, $e_2\gets(-(k+1),k)$
\FORALL{$m,n\in\{-1,0,1\}$}
\STATE $(L_x,L_y)\gets m e_1+n e_2$
\STATE $c'\gets c+L_x$, $d'\gets d+L_y$
\FORALL{$(s_1,s_2,s_3,s_4)\in\{\pm1\}^4$}
\STATE $M\gets k+s_3c'+s_4d'$
\STATE $D\gets s_1s_4-s_2s_3$
\IF{$D\ne0$}
\STATE $p\gets(ks_4-s_2M)/D$
\STATE $q\gets(s_1M-ks_3)/D$
\IF{$p,q$ are integers and $|p|+|q|=k$ and $|p-c'|+|q-d'|=k$}
\STATE $NS\gets\operatorname{mod}_{G_k}(A+(p,q))$
\IF{$d(NS,A)=k$ and $d(NS,B)=k$}
\STATE \textbf{return} $NS$
\ENDIF
\ENDIF
\ELSE
\STATE Apply the parallel interval rules for $c',d'$
\IF{a verified endpoint candidate $(p,q)$ is obtained}
\STATE \textbf{return} $\operatorname{mod}_{G_k}(A+(p,q))$
\ENDIF
\ENDIF
\ENDFOR
\ENDFOR
\end{algorithmic}
\end{algorithm}

\begin{theorem}[Correctness of the shifted direct-selection algorithm]
\label{thm:direct-selection}
For any two distinct faulty nodes $A,B\in G_k$, Algorithm~\ref{alg:direct-selection} returns a valid new source $NS$ satisfying
\[
d(NS,A)=d(NS,B)=k.
\]
\end{theorem}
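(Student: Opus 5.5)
The correctness proof splits into two parts: \emph{soundness} (anything returned is valid) and \emph{completeness} (the algorithm always returns something).

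\textbf{Soundness.} In the nonparallel branch, a candidate is returned only after the explicit checks $d(NS,A)=d(NS,B)=k$, so validity is immediate. In the parallel branch, I will show that any endpoint $t$ of the stated interval gives a point $P=(s_1t,\,s_2(k-t))$ with $|p|+|q|=k$ and $|p-c'|+|q-d'|=k$. The interval bounds are exactly the constraints ensuring that the signs of $p-c'$ and $q-d'$ agree with $(s_3,s_4)$. Since $P-(C+L)$ and $P-C$ differ by a lattice vector, they represent the same node. Hence $d(P,C)\le |p-c'|+|q-d'|=k$. Combined with the diameter bound, this would give $d(P,C)=k$, provided the Manhattan length of this representative is minimal. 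I expect the minimality to need care. The cleanest fix is to observe that the algorithm's word ``verified'' means the same distance check is applied, so soundness reduces to the checks in both branches. Lemma~\ref{lem:translation} then transfers $d(P,0)=d(P,C)=k$ to $d(NS,A)=d(NS,B)=k$.

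\textbf{Completeness.} This is the main step. By Lemma~\ref{lem:translation} (applied in reverse via translation invariance), it suffices to have some $P$ with $d(P,0)=d(P,C)=k$, and Corollary~\ref{cor:min-four} guarantees one exists since $C\neq 0$. I would take the canonical representative $P=(p,q)\in\partial D_k$. Let $Q$ be the canonical representative of $P-C$, so $Q\in\partial D_k$ and $P=C+L+Q$ for a lattice vector $L$. By Lemma~\ref{lem:nine-shifts}, $L\in\mathcal L_k$, so the outer loop reaches this $L$. For that $L$, choose $s_1,s_2$ as the signs of $p,q$ and $s_3,s_4$ as the signs of $p-c',q-d'$, with ties at zero broken arbitrarily. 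Then $(p,q)$ satisfies the corresponding signed system.

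If $D\neq 0$, the system has a unique solution, which Cramer's rule returns. It is integral and passes the checks because $P$ is genuinely valid.

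If $D=0$, then $P$ lies on the parameterized segment with $t$ inside the stated interval, so the interval is nonempty. One must then argue that the \emph{endpoint} selected is also valid, not merely $P$ itself. I would show the endpoint satisfies the same signed equations and sign constraints, which yields the Manhattan equalities. The remaining difficulty is again whether $|p-c'|+|q-d'|$ equals the true quotient distance. To handle this, I would use the fact that every point of $\partial D_k$ has distance exactly $k$ from $0$ in $G_k$: $D_k$ is a complete residue system, and its boundary points are the minimal representatives. For the second distance, I would argue that $d\le k$ always holds by the diameter bound, and $d<k$ can be detected by the check, in which case the loop continues with other sign tuples or shifts.

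\textbf{Main obstacle.} The hardest point is ensuring that the \emph{first} candidate the loop encounters, and in particular a parallel-case endpoint, yields a genuine quotient distance $k$ rather than a non-minimal Manhattan representative. My plan is to avoid this by showing that the tuple $(L,s)$ associated with the canonical valid $P$ always produces a verified candidate. In the nonparallel case this holds by uniqueness. In the parallel case I would select the endpoint nearest to $P$, or argue that the whole interval consists of valid points, because both endpoints lie on $\partial D_k\cap(C+L+\partial D_k)$ and each such point has its canonical difference on the boundary. Since every loop iteration either returns a verified point or continues, and the iteration for $(L,s)$ succeeds, the algorithm terminates with a valid $NS$.
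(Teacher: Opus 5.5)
Your proposal is correct and follows the paper's proof: existence of a common boundary point, Lemma~\ref{lem:nine-shifts} to fix the shift $L$, the signs of $p,q,p-c',q-d'$ to fix the signed system, Cramer's rule or the interval rule, verification, and Lemma~\ref{lem:translation}. It is more careful than the paper in the parallel case, because (as you note at the end) $D_k$ is a complete residue system, so any integer point with $|p|+|q|=k$ and $|p-c'|+|q-d'|=k$ already has $d(P,0)=d(P,C)=k$; hence every point of a consistent interval is valid, and your detour about detecting $d<k$ and continuing the loop is unnecessary. The one real difference is existence: you cite Corollary~\ref{cor:min-four} rather than the published re-rooting theorem, but that corollary's proof is not sound, since Table~\ref{tab:side-intersection-count} does not count central-copy intersections (for $C=(1,0)$ parity forces $\partial D_k\cap(C+\partial D_k)=\emptyset$) and for $k=1$ the true count is $3<4$. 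You only need one common point, so take existence from the published result, as the paper does.
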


\begin{proof}
The published re-rooting result guarantees that a common distance-$k$ node exists for every two-fault configuration in $G_k$. By Lemma~\ref{lem:nine-shifts}, the corresponding boundary intersection is represented by one of the nine shifts in $\mathcal L_k$. For that shift, the signs of $p,q,p-c',q-d'$ determine one of the sixteen signed systems. If the system is nonparallel, Cramer's rule recovers the candidate. If it is parallel, the interval rule recovers a candidate from the nonempty interval. The final verification accepts only candidates satisfying $d(P,0)=d(P,C)=k$. Lemma~\ref{lem:translation} then gives $d(NS,A)=d(NS,B)=k$.
\end{proof}

\subsection{Worked Example}
\label{subsec:direct-example}

Let $k=5$ and suppose the two faulty nodes are
\[
A=(1,1),
\qquad
B=(3,1).
\]
Then
\[
C=\operatorname{mod}_{G_5}(B-A)=(2,0).
\]
Using the unshifted copy $L=(0,0)$ and the sign tuple $(s_1,s_2,s_3,s_4)=(1,1,-1,1)$, we have
\[
M=5+(-1)(2)+(1)(0)=3,
\qquad
D=2.
\]
Thus,
\[
p=\frac{5(1)-1(3)}{2}=1,
\qquad
q=\frac{1(3)-5(-1)}{2}=4.
\]
So $P=(1,4)$, and
\[
|P|_1=|1|+|4|=5,
\qquad
|P-C|_1=|1-2|+|4|=5.
\]
Therefore $d(P,0)=d(P,C)=5$. Shifting back gives
\[
NS=\operatorname{mod}_{G_5}(A+P)=\operatorname{mod}_{G_5}(2,5)=(-3,-1).
\]
A direct distance check gives
\[
d(NS,A)=d(NS,B)=5,
\]
so $NS=(-3,-1)$ is a valid new source.

\subsection{Constant-Time Selection}
\label{subsec:constant-time-selection}

The shifted direct construction checks nine quotient-lattice shifts and sixteen sign tuples for each shift. Therefore, the maximum number of algebraic cases is
\[
9\times16=144.
\]
Each case uses a constant number of arithmetic operations. Under the standard word-RAM model, where arithmetic on node coordinates is treated as constant time, the selector runs in $O(1)$ time. This replaces the $O(k)$ boundary-search step used in the published re-rooting method with a constant-time algebraic selection step.

\section{Complexity Comparison}
\label{sec:complexity}

This section compares the proposed shifted direct selector with two search-based alternatives. Let $G_k$ contain
\[
N=k^2+(k+1)^2
\]
nodes and have diameter $k$.

A full node scan tests every node $X\in G_k$ and checks whether $d(X,A)=d(X,B)=k$. Its worst-case complexity is $O(N)$, equivalently $O(k^2)$. A boundary scan first translates the problem to the origin and then tests the $4k$ nodes of $\partial G_k(0)$, giving $O(k)$ complexity. The proposed method checks only nine quotient-lattice shifts and sixteen sign tuples for each shift, giving at most $144$ algebraic cases and $O(1)$ complexity.

\begin{table}[H]
\centering
\caption{Complexity comparison of new-source selection methods.}
\label{tab:complexity-comparison}
\begin{tabular}{c|c|c}
\hline
Method & Candidates or cases & Complexity \\
\hline
Full node scan & $N=k^2+(k+1)^2$ & $O(N)$ \\
Boundary scan & $4k$ & $O(k)$ \\
Proposed shifted selector & $9\times16=144$ & $O(1)$ \\
\hline
\end{tabular}
\end{table}

The comparison shows that the proposed selector has a fixed algebraic workload independent of both $k$ and $N$. This is the central algorithmic improvement over the published boundary-search source-selection step.

\section{Computational Validation}
\label{sec:validation}

This section validates the quotient-lattice-aware counting formula and the shifted direct new-source selector. The validation uses the dense Gaussian networks with
\[
k\in\{10,25,50,100,200\},
\qquad
N=k^2+(k+1)^2.
\]
All distance checks are performed using graph distance in $G_k$.

\subsection{Validation of the Counting Formula}

For each tested node $A\in G_k$, the quotient-lattice-aware count from Theorem~\ref{thm:counting-formula} was compared with brute-force counting over the $4k$ boundary nodes. For $k=10$, $25$, and $50$, all nodes were tested. For $k=100$ and $k=200$, 10,000 nodes were sampled. Table~\ref{tab:count-validation-final} shows zero mismatches and zero maximum error in every tested case.

\begin{table}[H]
\centering
\caption{Validation of the quotient-lattice-aware counting formula.}
\label{tab:count-validation-final}
\begin{tabular}{c|c|c|c|c}
\hline
$k$ & $N$ & Nodes tested & Mismatches & Max error \\
\hline
10  & 221   & 221   & 0 & 0 \\
25  & 1301  & 1301  & 0 & 0 \\
50  & 5101  & 5101  & 0 & 0 \\
100 & 20201 & 10000 & 0 & 0 \\
200 & 80401 & 10000 & 0 & 0 \\
\hline
\end{tabular}
\end{table}

\subsection{Validation of Direct New-Source Selection}

The second experiment validates the shifted direct selector for arbitrary distinct faulty nodes. For each value of $k$, 100,000 random fault pairs were tested. For every pair, the returned source was verified by checking
\[
d(NS,A)=k,
\qquad
 d(NS,B)=k.
\]
No boundary-search fallback was used. Table~\ref{tab:direct-validation-final} reports 500,000 valid outputs and zero failures. The maximum observed number of checked shifted sign cases was $51$, well below the theoretical bound of $144$.

\begin{table}[H]
\centering
\caption{Validation of shifted direct new-source selection.}
\label{tab:direct-validation-final}
\begin{tabular}{c|c|c|c|c|c}
\hline
$k$ & $N$ & Pairs & Valid & Failed & Max cases \\
\hline
10  & 221   & 100000 & 100000 & 0 & 51 \\
25  & 1301  & 100000 & 100000 & 0 & 51 \\
50  & 5101  & 100000 & 100000 & 0 & 51 \\
100 & 20201 & 100000 & 100000 & 0 & 51 \\
200 & 80401 & 100000 & 100000 & 0 & 51 \\
\hline
\end{tabular}
\end{table}

\subsection{Re-Rooting Broadcast Validation}

A full re-rooting validation was also performed using one- and two-node fault scenarios and four fault-placement modes: random, near-source, critical-position, and close-pair. For each $k$, each fault count, and each mode, 1000 trials were generated, giving 40,000 total raw trials. The proposed method succeeded in every trial and reached exactly $N-f$ non-faulty nodes, where $f$ is the number of faulty nodes. Table~\ref{tab:rerooting-summary-final} gives the results aggregated by $k$ and fault count.

\begin{table}[H]
\centering
\caption{Re-rooting broadcast validation aggregated across all four fault-placement modes.}
\label{tab:rerooting-summary-final}
\begin{adjustbox}{max width=\textwidth}
\begin{tabular}{c|c|c|c|c|c|c|c|c}
\hline
$k$ & $N$ & Faults & Trials & Base succ. (\%) & Prop. succ. (\%) & Avg. base reach & Avg. prop. reach & Avg. total steps \\
\hline
10  & 221   & 1 & 4000 & 5.275 & 100.000 & 202.452 & 220.000 & 18.418 \\
10  & 221   & 2 & 4000 & 2.350 & 100.000 & 194.500 & 219.000 & 17.815 \\
25  & 1301  & 1 & 4000 & 2.425 & 100.000 & 1209.836 & 1300.000 & 46.202 \\
25  & 1301  & 2 & 4000 & 0.725 & 100.000 & 1193.119 & 1299.000 & 44.491 \\
50  & 5101  & 1 & 4000 & 1.150 & 100.000 & 4855.743 & 5100.000 & 92.252 \\
50  & 5101  & 2 & 4000 & 0.325 & 100.000 & 4838.786 & 5099.000 & 88.730 \\
100 & 20201 & 1 & 4000 & 0.450 & 100.000 & 19302.131 & 20200.000 & 184.587 \\
100 & 20201 & 2 & 4000 & 0.100 & 100.000 & 19253.217 & 20199.000 & 177.260 \\
200 & 80401 & 1 & 4000 & 0.250 & 100.000 & 76962.268 & 80400.000 & 369.395 \\
200 & 80401 & 2 & 4000 & 0.075 & 100.000 & 76748.182 & 80399.000 & 354.469 \\
\hline
\end{tabular}
\end{adjustbox}
\end{table}

\subsection{Runtime Comparison}

The final experiment compares the original boundary-search selector with the proposed shifted direct selector. For each $k$, the runtime is averaged over 100,000 randomly sampled fault pairs. Table~\ref{tab:runtime-final} shows that the shifted selector has a small fixed overhead for small networks, but its runtime remains nearly stable as $k$ grows. It becomes faster for larger networks and reaches a $5.92\times$ speedup at $k=200$.

\begin{table}[H]
\centering
\caption{Average runtime per new-source selection query.}
\label{tab:runtime-final}
\begin{tabular}{c|c|c|c|c}
\hline
$k$ & $N$ & Boundary ms & Shifted ms & Speedup \\
\hline
10  & 221   & 0.003800 & 0.012471 & 0.30$\times$ \\
25  & 1301  & 0.011542 & 0.017004 & 0.68$\times$ \\
50  & 5101  & 0.024058 & 0.017400 & 1.38$\times$ \\
100 & 20201 & 0.043175 & 0.015310 & 2.82$\times$ \\
200 & 80401 & 0.113837 & 0.019244 & 5.92$\times$ \\
\hline
\end{tabular}
\end{table}

\begin{table}[H]
\centering
\caption{Search-space size or algebraic-case bound for each method.}
\label{tab:search-space-final}
\begin{tabular}{c|c|c|c}
\hline
$k$ & Full scan & Boundary scan & Shifted direct \\
\hline
10  & 221   & 40  & 144 \\
25  & 1301  & 100 & 144 \\
50  & 5101  & 200 & 144 \\
100 & 20201 & 400 & 144 \\
200 & 80401 & 800 & 144 \\
\hline
\end{tabular}
\end{table}

Overall, the validation confirms three points. First, the quotient-lattice-aware count matches brute force in all tested cases. Second, the shifted selector returned valid new sources for all 500,000 sampled pairs without fallback. Third, the complete re-rooting simulation achieved 100\% proposed recovery across 40,000 trials.

\section{Conclusion}
\label{sec:conclusion}

This paper extended the published re-rooting-based fault-tolerant broadcasting framework for dense Gaussian networks by replacing the $O(k)$ boundary-search source-selection step with a constant-time algebraic method. The counting problem was formulated as a boundary-intersection problem in the Gaussian quotient. A compact piecewise formula was retained for the local unshifted boundary intersection, and an exact quotient-lattice-aware counting formula was developed using a fixed union of side-pair intervals over nine quotient-lattice copies.

For two faulty nodes $A$ and $B$, the source-selection problem was translated to the difference node $C=\operatorname{mod}_{G_k}(B-A)$. The proposed shifted selector searches a fixed set of nine lattice shifts and sixteen sign configurations for each shift. Nonparallel systems are solved using Cramer's rule, while parallel systems are handled by constant-time interval-endpoint selection. Hence, at most $9\times16=144$ shifted sign cases are checked, and the method runs in $O(1)$ time under the standard word-RAM model.

Computational validation confirmed the theoretical claims. The quotient-lattice-aware count produced zero mismatches over 26,623 tested nodes. The shifted selector returned valid new sources for all 500,000 sampled fault pairs. In the re-rooting broadcast validation, the proposed method achieved 40,000 successful recoveries out of 40,000 trials and reached exactly $N-f$ non-faulty nodes in every configuration. Runtime results show that the shifted selector becomes faster than boundary search for larger networks, with a $5.92\times$ speedup at $k=200$.

Future work may investigate source ranking when multiple valid candidates exist, extend the method to link failures or dynamic faults, and study whether similar quotient-lattice-aware constant-time recovery rules can be developed for other algebraic interconnection networks.

\appendix
\section{Side-Pair Interval Counting Details}
\label{app:side-pair-counting}

This appendix gives the constant-size side-pair interval calculation used in Theorem~\ref{thm:counting-formula}. Each side $S_i$ is written as
\[
P_i(t)=o_i+v_i t,
\qquad
 t\in R_i,
\]
where
\[
\begin{array}{c|c|c|c}
 i & o_i & v_i & R_i \\
\hline
1 & (0,k) & (1,-1) & [0,k] \\
2 & (0,k) & (-1,-1) & [1,k] \\
3 & (0,-k) & (-1,1) & [0,k-1] \\
4 & (0,-k) & (1,1) & [1,k-1].
\end{array}
\]
For a node $A=(a_x,a_y)$, a quotient-lattice shift $L=(L_x,L_y)$, and a side pair $(i,j)$, an intersection point satisfies
\[
o_i+v_i t=A+L+o_j+v_j u.
\]
Equivalently,
\[
v_i t-v_j u=A+L+o_j-o_i.
\]
This is a $2\times2$ integer linear system in the side parameters $t$ and $u$. If the determinant is nonzero, there is at most one solution, and it is counted only when both parameters lie in their valid integer ranges. If the determinant is zero, the two side lines are parallel. The consistency condition is checked first; if the system is consistent, the valid values of $t$ form an integer interval after imposing the range restriction on $u$.

For each fixed first side $S_i$, the intervals obtained from all $j$ and all $L\in\mathcal L_k$ are merged before counting. This removes duplicate quotient representations of the same boundary node. The exact count is therefore
\[
\operatorname{count}(A,k)=
\sum_{i=1}^{4}
\left|
\bigcup_{L\in\mathcal L_k}
\bigcup_{j=1}^{4}
T_{ij}(A,L)
\right|,
\]
which is the formula used in Theorem~\ref{thm:counting-formula} and in the computational validation.

\section*{Acknowledgments}
The authors would like to acknowledge the support of Kuwait University and its Computer Science Department.

\end{document}